\documentclass[twocolumn,aps,pra,notitlepage,superscriptaddress,longbibliography]{revtex4-1}
\usepackage[breaklinks=true]{hyperref}
\usepackage{amsmath,amssymb,amsthm,mathtools,mathrsfs}
\usepackage[table,dvipsnames]{xcolor}
\usepackage{tikz}
\usepackage{adjustbox}
\usepackage{dsfont}
\usepackage[normalem]{ulem}
\usepackage{enumerate}
\usepackage[all]{xy} 
\usepackage{graphicx}
\usepackage{subfigure}

\theoremstyle{plain}
\newtheorem{theorem}{Theorem}
\newtheorem{lemma}{Lemma}

\theoremstyle{definition}
\newtheorem{definition}{Definition}
\newtheorem{remark}{Remark}
\newtheorem{example}{Example}

\newcommand{\bd}{\begin{definition}}
\newcommand{\ed}{\end{definition}}
\newcommand{\bt}{\begin{theorem}}
\newcommand{\et}{\end{theorem}}
\newcommand{\be}{\begin{equation}}
\newcommand{\ee}{\end{equation}}
\newcommand{\blem}{\begin{lemma}}
\newcommand{\elem}{\end{lemma}}
\newcommand{\bx}{\begin{example}}
\newcommand{\ex}{\end{example}}

\newcommand\define[1]{\emph{\textbf{#1}}}


\DeclareMathAlphabet{\mathpzc}{OT1}{pzc}{m}{it} 
 \DeclareFontFamily{OT1}{pzc}{}
 \DeclareFontShape{OT1}{pzc}{m}{it}{ <-> s*[1.2] pzcmi7t }{}
 \DeclareMathAlphabet{\mathpzc}{OT1}{pzc}{m}{it}
 


\def\R{{{\mathbb R}}}

\newcommand{\Tr}{\operatorname{Tr}}


\def\E{\mathcal{E}}
\def\N{\mathcal{N}}

\normalem

\begin{document}									
\preprint{APS/123-QED}

\title{Quantum Mutual Information in Time}

\author{James Fullwood}
\email{fullwood@hainanu.edu.cn}
\affiliation{School of Mathematics and Statistics, Hainan University, Haikou, Hainan Province, 570228, China.}
\author{Zhen Wu}
\email{wzmath@sjtu.edu.cn}
\affiliation{School of Mathematical Sciences, MOE-LSC, Shanghai Jiao Tong University, Shanghai, 200240, China}
\author{Arthur J. Parzygnat}
\email{arthurjp@mit.edu}
\affiliation{Experimental Study Group, Massachusetts Institute of Technology, Cambridge, Massachusetts 02139, USA.}
\author{Vlatko Vedral}
\affiliation{Clarendon Laboratory, University of Oxford, Parks Road, Oxford OX1 3PU, United Kingdom.}

\date{\today}

\begin{abstract}
While the quantum mutual information is a fundamental measure of quantum information, it is only defined for spacelike-separated quantum systems. Such a limitation is not present in the theory of classical information, where the mutual information between two random variables is well-defined irrespective of whether or not the variables are separated in space or separated in time. Motivated by this disparity between the classical and quantum mutual information, we employ the pseudo-density matrix formalism to define a simple extension of quantum mutual information into the time domain. As in the spatial case, we show that such a notion of quantum mutual information in time serves as a natural measure of correlation between timelike-separated systems, while also highlighting ways in which quantum correlations distinguish between space and time. We also show how such quantum mutual information is time-symmetric with respect to quantum Bayesian inversion, and then we conclude by showing how mutual information in time yields a Holevo bound for the amount of classical information that may be extracted from sequential measurements on an ensemble of quantum states. 
\end{abstract}

	\maketitle

\section{Introduction}
The mutual information of two quantum systems $A$ and $B$ whose joint state is represented by a bipartite density matrix $\rho_{AB}$ is the real number $I(A:B)$ 
given by
\begin{equation}\label{MIX747}
I(A:B)=S(\rho_A)+S(\rho_B)-S(\rho_{AB})\, ,
\end{equation}
where $S(\cdot)$ denotes the von~Neumann entropy and $\rho_A$ and $\rho_B$ are the reduced density matrices of $\rho_{AB}$~\cite{CeAd97,NiCh11}. As $\rho_{AB}$ describes the joint state of two spacelike separated systems $A$ and $B$, it follows that the quantum mutual information is essentially a \emph{static} measure of information. This non-dynamical nature of quantum mutual information is in stark contrast with the notion of the classical mutual information $I(X:Y)$ of two random variables $X$ and $Y$, which is defined irrespective of whether $X$ and $Y$ are separated in space or in time~\cite{CoTh06}. 

In this work, we make use of the pseudo-density matrix (PDM) formalism \cite{FJV15,FuPa24} to define a temporal extension of quantum mutual information, and we investigate its properties. A PDM is a generalization of a density matrix that encodes correlations across both space and time, thus providing a single mathematical formalism for the study of both static and dynamical aspects of quantum information (see Refs.~\cite{MVVAPGDG21,ZPTGVF18,Pisar19,Marletto_2020,Marletto_2019,Liu_2024,LQDV23,FuPa24,song23,Jia_2023,PaFu24} for further development and applications of PDMs). In particular, PDMs provide a notion of a joint state $R_{AB}$ associated with timelike separated quantum systems $A$ and $B$ whose reduced density matrices are the individual states $\rho_A$ and $\rho_B$. However, while $R_{AB}$ is Hermitian and of unit trace, it is not positive in general. Thus, the extension of the density matrix formalism to include non-positive PDMs is akin to the way in which the metric of space is extended into the time domain in special relativity, where such an extension results in a metric of Lorentzian (as opposed to Euclidean) signature~\cite{French68,HaEl73}. 

As for defining quantum mutual information in time, note that since the logarithm is not uniquely defined for non-positive matrices, simply replacing $\rho_{AB}$ by $R_{AB}$ in equation \eqref{MIX747} will not in general yield a well-defined notion of quantum mutual information in time. To circumvent this issue, we use an extension of von~Neumann entropy to Hermitian matrices given by~\cite{FuPa23,PaTK23,Tu22,SSW14}
\begin{equation}\label{ENTX27}
S(X)=-\Tr\big[X\log|X|\big]\, .
\end{equation}
While there are various justifications for our choice of $S$ over other alternatives (such as those used e.g. in Refs.~\cite{Jia_2023,PaTK23}), a primary justification is that $S$ satisfies \cite{FuPa23} 

\begin{itemize}
\item
(Additivity) $S(X\otimes Y)=S(X)+S(Y)$ for all Hermitian matrices $X$ and $Y$.
\item
(Orthogonal Convexity) If $\{X^i\}$ is a collection of mutually orthogonal Hermitian matrices of unit trace and $p_i$ is a probability distribution, then 
\[
S\Big(\sum_{i}p_iX^i\Big)=H(p)+\sum_{i}p_iS(X^i)\, ,
\]
where $H(p)$ is the Shannon entropy of the probability distribution $p_i$. 
\end{itemize}
While the properties of additivity and orthogonal convexity are certainly desirable from a purely mathematical perspective, we show in Section~\ref{HQT} that additivity and orthogonal convexity of $S$ are also crucial for establishing a temporal analog of the Holevo bound~\cite{Holevo73}, thus providing an operational justification for the use of $S$.

Employing such a Hermitian extension $S$ of von~Neumann entropy and replacing $\rho_{AB}$ by a pseudo-density matrix $R_{AB}$ in equation \eqref{MIX747} yields a well-defined notion of quantum mutual information for timelike separated quantum systems. In what follows, we prove general statements regarding such a notion of mutual information in time and present many examples. As in the spatial case, we find that such a mutual information in time provides a natural measure of correlation between timelike separated systems, while also highlighting the way in which quantum correlations distinguish between space and time. In particular, for sequential measurements performed on a system of qubits at two times, we show in Section~\ref{MIT} that the mutual information vanishes for a system which is discarded and re-prepared between measurements, while it is maximized for a system undergoing unitary evolution between measurements. However, contrary to spatial mutual information, which for a pair of spacelike separated qubits can attain a maximum value of 2 in the presence of entanglement, we show in Section~\ref{QBT2X} that the mutual information in time between two timelike separated qubits never exceeds a value of 1. We argue that this is a reflection of the fact that, unlike spatial correlations, maximal correlations in time are non-monogamous~\cite{Marletto_2020}, as a qubit at a fixed point in time may be maximally correlated with a qubit both at a time in the future and a time in the past. 

While for mutual information in space it is 
manifest from definition~\eqref{MIX747} that $I(A:B)=I(B:A)$, for mutual information in time  the existence of such a symmetry is more subtle, as systems $A$ and $B$ may be related in time by a non-reversible process. However, we show in Section~\ref{TRS} that when the quantum channel establishing temporal correlations between $A$ and $B$ is \emph{Bayesian invertible} in the sense defined in Ref.~\cite{FuPa22a}, then it is fact that case that $I(A:B)=I(B:A)$, in accordance with the spatial case. 

\section{Pseudo-density matrices}

Let $\sigma_i$ denote the $i$th Pauli matrix for $i=0,...,3$, and let $m>0$ be a positive integer. For every $\alpha\in \{0,...,3\}^{m}$, we let $\alpha_{j}\in \{0,...,3\}$ denote the $j$th component of $\alpha$ for all $j\in \{1,...,m\}$, and then define $\sigma_{\alpha}$ as
\[
\sigma_{\alpha}=\sigma_{\alpha_1}\otimes \cdots \otimes \sigma_{\alpha_m}\, .
\]
Suppose a sequential measurement of $\sigma_{\alpha}$ followed by $\sigma_{\beta}$ is performed on a system of $m$-qubits at times $t_A$ and $t_B$ with $t_A<t_B$, and denote the Hilbert spaces of the system at the two times by $\mathcal{H}_A$ and $\mathcal{H}_B$, respectively. If the system evolves according to the quantum channel $\N:\mathcal{L}(\mathcal{H}_A)\to \mathcal{L}(\mathcal{H}_B)$ between measurements (we use $\mathcal{L}(\mathcal{H})$ to denote the algebra of linear operators on a Hilbert space $\mathcal{H}$), then the theoretical \emph{two-time expectation value} of the sequential measurement of $\sigma_{\alpha}$ followed by $\sigma_{\beta}$ is the real number $\langle  \sigma_{\alpha}\,, \sigma_{\beta} \rangle$ given by
\[
\langle  \sigma_{\alpha}\,, \sigma_{\beta} \rangle=\Tr\Big[\N(\Pi_{\alpha}^{+}\, \rho \Pi_{\alpha}^{+})\sigma_{\beta}\Big]-\Tr\Big[\N(\Pi_{\alpha}^{-}\, \rho \Pi_{\alpha}^{-})\sigma_{\beta}\Big]\, ,
\]
where $\rho$ is the initial state of the system, and $\Pi_{\alpha}^{\pm}=\frac{1}{2}(\mathds{1}\pm\sigma_{\alpha})$ is the orthogonal projection operator onto the $\pm 1$-eigenspace of $\sigma_{\alpha}$. 

In such a two-time measurement scenario, the associated \emph{pseudo-density matrix} (PDM) is defined to be the bipartite hermitian operator $R_{AB}$ on $\mathcal{H}_A\otimes \mathcal{H}_B$ given by
\begin{align} \label{PDMDFXS77}
R_{AB}= \frac{1}{4^{m}} \sum_{\alpha,\beta} \langle  \sigma_{\alpha}\,, \sigma_{\beta} \rangle \sigma_{\alpha} \otimes \sigma_{\beta}  \, .
\end{align}
It follows from the definition of PDM together with properties of Pauli matrices that for all $\alpha$ and $\beta$, 
\begin{equation}
\langle  \sigma_{\alpha}\,, \sigma_{\beta} \rangle=\Tr\Big[R_{AB}(\sigma_{\alpha}\otimes \sigma_{\beta})\Big]\, .
\end{equation}
Thus, PDMs extend the operational meaning of bipartite density matrices to the temporal domain for Pauli observables.

Although the definition of PDM is conceptually simple, it is not always practical for calculations. In Refs.~\cite{HHPBS17,LQDV23,FuPa24}, it was shown that the PDM $R_{AB}$ defined by \eqref{PDMDFXS77} may be given by the formula
\begin{equation} \label{SPTPDXS89}
R_{AB}=\frac{1}{2}\Big\{\rho\otimes \mathds{1}\, , \mathscr{J}[\N]\Big\}\, ,
\end{equation}
where $\{\cdot , \cdot \}$ denotes the anticommutator and $ \mathscr{J}[\N]=\sum_{i,j}|i\rangle \langle j |\otimes \N(|j\rangle \langle i |)$ is the \emph{Jamio{\l}kowski matrix} of the channel $\N$ \cite{Ja72}. 
When we wish to emphasize the dependence of the PDM $R_{AB}$ on the initial state $\rho$ and channel $\N$ as part of a two-time measurement scenario, we denote the RHS of Eq.~\eqref{SPTPDXS89} by $\N\star \rho$ and  refer to it as the \emph{spatiotemporal product} of the channel $\N$ and the initial state $\rho$.

The notion of PDM naturally extends to $n$-sequential measurements on a system of $m$-qubits for arbitrary $n>0$, yielding an operator $R_{A_1\cdots A_n}$ on the $n$-fold tensor product $\mathcal{H}_{A_1}\otimes \cdots \otimes \mathcal{H}_{A_n}$, where $A_i$ denotes the system at time $t_i$, with $t_1<\cdots <t_n$. We denote such an $n$-time PDM simply by $R_{1\cdots n}$. If the system evolves according to a channel $\N_i$ between measurements at times $t_i$ and $t_{i+1}$ for $i=1,...,n-1$, it was shown in Refs.~\cite{LQDV23,Fu23} that $R_{1\cdots n}$ may be given by the recursive formula
\begin{equation} 
\label{eq:multitimePDM}
R_{1\cdots n}=(\N_{n-1}\circ \Tr_{1\cdots (n-2)})\star R_{1\cdots (n-1)}\, ,
\end{equation}
where $\Tr_{1\cdots (n-2)}$ is the partial trace over the subsystems $A_1\cdots A_{n-2}$ and $\star$ denotes the spatiotemporal product.

We note that while PDMs are not positive in general, they are always Hermitian and of unit trace. Moreover, the reduced marginals onto a single factor $\mathcal{L}(\mathcal{H}_{A_i})$ are always density matrices, representing the state of the system at time $t_i$. Interestingly, positive PDMs admit dual interpretations as being extended across space on the one hand, and time on the other. We refer to such PDMs as \emph{dual states}. 

\section{Mutual Information in Time}\label{MIT}

Let $R_{1\cdots n}$ be an $n$-time PDM as in~\eqref{eq:multitimePDM}, let $A_i$ denote the associated system at time $t_i$ for $i=1,\dots,n$, let $k\in \{2,\ldots,n-1\}$, and let $R_{1\cdots k}$ and $R_{k+1\cdots n}$ be the PDMs corresponding to the first $k$ measurements and final $n-k$ measurements, respectively, which may be obtained from the PDM $R_{1\cdots n}$ by tracing out the associated complementary subsystems. We then define the \define{mutual information in time} between the joint temporal systems $A=A_{1}\cdots A_{k}$ and $B=A_{k+1}\cdots A_{n}$ to be the element $I(A:B)\in \R$ given by
\[
I(A:B)=S(R_{1\cdots k})+S(R_{k+1\cdots n})-S(R_{1\cdots n})\, ,
\]
where $S(\cdot)$ is the Hermitian extension of the von~Neumann entropy given by $S(X)=-\Tr\big[X\log|X|\big]$.

We now present some examples of mutual information in time.

\bx[A dual state]
Let $\rho_{AB}$ be the bipartite density matrix given by
\[
\rho_{AB}=
\left(
\begin{array}{cccc}
13/24&0&0&0\\
0&5/24&-1/6&0\\
0&-1/6&5/24&0\\
0&0&0&1/24\\
\end{array}
\right)\, .
\]
The density matrix $\rho_{AB}$ is an entangled state~\cite{We89} that has been shown to also be a 2-time PDM~\cite{song23}, and hence is a dual state. As such, the associated systems $A$ and $B$ may either be viewed as being spacelike separated or timelike separated. Thus, their mutual information in space is equal to their mutual information in time, which is approximately $0.2315$. 
\ex

\bx[A single qubit at two times]
Let $R_{AB}$ be the 2-time PDM associated associated a maximally mixed qubit that undergoes trivial dynamics between measurements, which is given by 
\[
R_{AB}=\frac{1}{2}{\tt SWAP}=\frac{1}{2}\left(
\begin{array}{cccc}
1&0&0&0\\
0&0&1&0\\
0&1&0&0\\
0&0&0&1\\
\end{array}
\right)\, .
\]
As the eigenvalues of $R_{AB}$ are $(1/2,1/2,1/2,-1/2)$, it follows that $I(A:B)=1$, which coincides with the von~Neumann entropy of the initial maximally mixed state. We note that although the temporal correlations between the qubit at two times in this example can violate CHSH inequalities \cite{BTCV04}, thus exhibiting maximal correlations in time, the mutual information between the qubit at two times is not 2, as one would expect from the case of a pair of maximally entangled qubits in space \cite{CeAd97}. We will further address this disparity between the temporal and spatial cases in Section~\ref{QBT2X}.  
\ex

The previous example is a special case of the following:

\bt \label{MTX1}
Suppose $R_{AB}=\mathcal{U}\star \rho$ is a 2-time PDM associated with a multi-qubit system initially in state $\rho$ that evolves according to a unitary channel $\mathcal{U}$ between measurements. Then $I(A:B)=S(\rho)$.
\et

\begin{proof}
Let $m$ be the number of qubits of the system represented by $R_{AB}$, let $n=2^m$, let $\mathfrak{mspec}(X)$ denote the multiset of eigenvalues of a matrix $X$, and suppose $\mathfrak{mspec}(\rho)=\{\lambda_1,\ldots,\lambda_n\}$. It then follows from Lemma~5.7 in Ref.~\cite{FuPa23} that
\[
\mathfrak{mspec}(\mathcal{U}\star \rho)=\mathfrak{mspec}(\rho)\cup \left.\Big\{\pm \frac{\lambda_i+\lambda_j}{2}\,\, \right| \,\, 0<i<j\leq n\Big\}\, .
\]
Since $f(x)=-x\log|x|$ is an odd function, it follows that
\begin{equation} \label{UNTXDXS347}
S(R_{AB})=S(\mathcal{U}\star \rho)=S(\rho)=S\big(\mathcal{U}(\rho)\big)\, ,
\end{equation}
where the final equality follows from the unitary invariance of the entropy function $S$. We then have
\[
I(A:B)=S(\rho)+S\big(\mathcal{U}(\rho)\big)-S(\mathcal{U}\star \rho)=S(\rho)\, ,
\]
as desired.
\end{proof}

The next two examples illustrate cases when the evolution between measurements is non-unitary.

\bx[Discard and prepare] \label{DSCXPRXP57}
Let $\mathcal{N}$ be the discard-and-prepare channel given by $\mathcal{N}(\rho)=\Tr[\rho]\sigma$ for some state $\sigma$. It then follows that if $R_{AB}$ is a 2-time PDM of the form $R_{AB}=\mathcal{N}\star \rho$, then $R_{AB}=\rho\otimes \sigma$ (and hence is a dual state). In such a case, the timelike separated systems $A$ and $B$ represented by the PDM $R_{AB}$ are such that $I(A:B)=0$. 
\ex

\bx[Decoherence]
Let $R_{AB}$ be the 2-time PDM associated with a single qubit in an initial state $\rho=|-\rangle \langle - |$, which between measurements is to evolve according to the decoherence map $\mathcal{D}$ given by
\[
\mathcal{D}\left(
\begin{array}{cc}
\rho_{00} & \rho_{01} \\
\rho_{10} & \rho_{11}
\end{array}
\right)=\left(
\begin{array}{cc}
\rho_{00} & 0 \\
0 & \rho_{11}
\end{array}
\right)\, .
\]
The associated PDM $R_{AB}=\mathcal{D}\star \rho$ has eigenvalues
\[
\left(\frac{1+\sqrt{2}}{4},\frac{1+\sqrt{2}}{4},\frac{1-\sqrt{2}}{4},\frac{1-\sqrt{2}}{4}\right)\, ,
\]
from which it follows that $I(A:B)\approx 0.79824$. In Ref.~\cite{brody2024} it is argued that contrary to conventional wisdom, decoherence is a process of information flowing \emph{into} the system from its environment (as opposed to the other way around). In support of this claim, we conjecture that in such a case $I(A:B)$ then quantifies the information gained by the system due to decoherence.  
\ex

In the next example, we consider a single qubit at multiple times, with a varying initial state.

\bx[A single qubit at multiple times]
\label{ex:sqmt}
Let $R_{1\cdots n}$ be a multiple time single qubit PDM with initial state $\rho=\text{diag}(p,1-p)$ with $p\in [0,1]$. 
Denote the channel responsible for the evolution of the system between times $t_i$ and $t_{i+1}$ by $\mathcal{N}_i$, and let $\E_{\eta}$ be the depolarizing channel given by
\[
\E_{\eta}(\rho)=\eta \rho +(1-\eta)\Tr[\rho]\frac{\mathds{1}}{2}\, ,
\]
where $\eta\in [0,1]$ is the depolarization parameter. We then consider several cases of mutual information in time where $\N_i$ is either the identity channel or the depolarizing channel $\E_{\eta}$ for $n=3$ and $n=4$, the results of which are plotted as functions of $p$ and $\eta$ in Figure~\ref{Fig6}.
\ex

\begin{figure*}
	\subfigure[]
 {\includegraphics[width=0.4\textwidth]{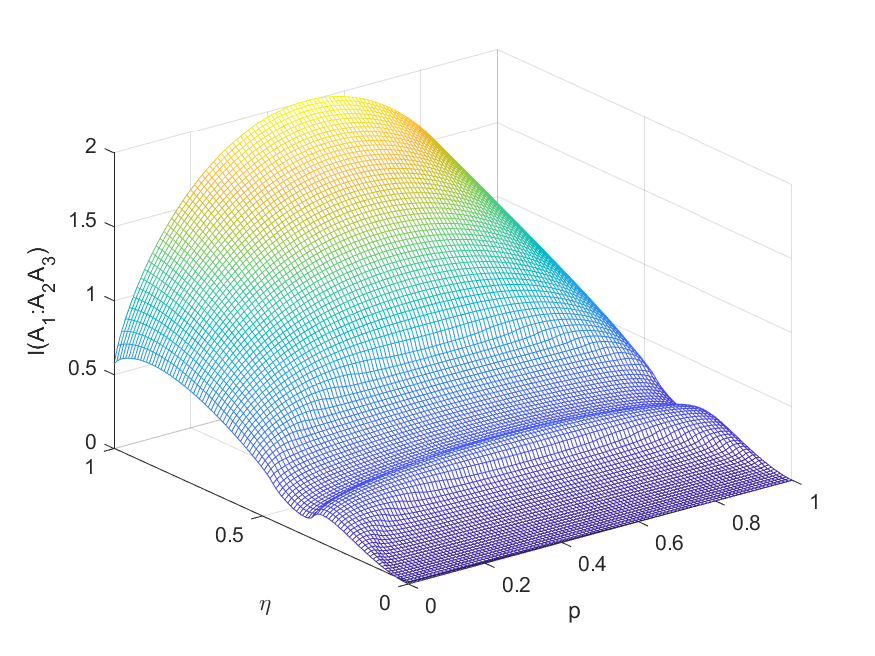}}
	\subfigure[]
 {\includegraphics[width=0.4\textwidth]{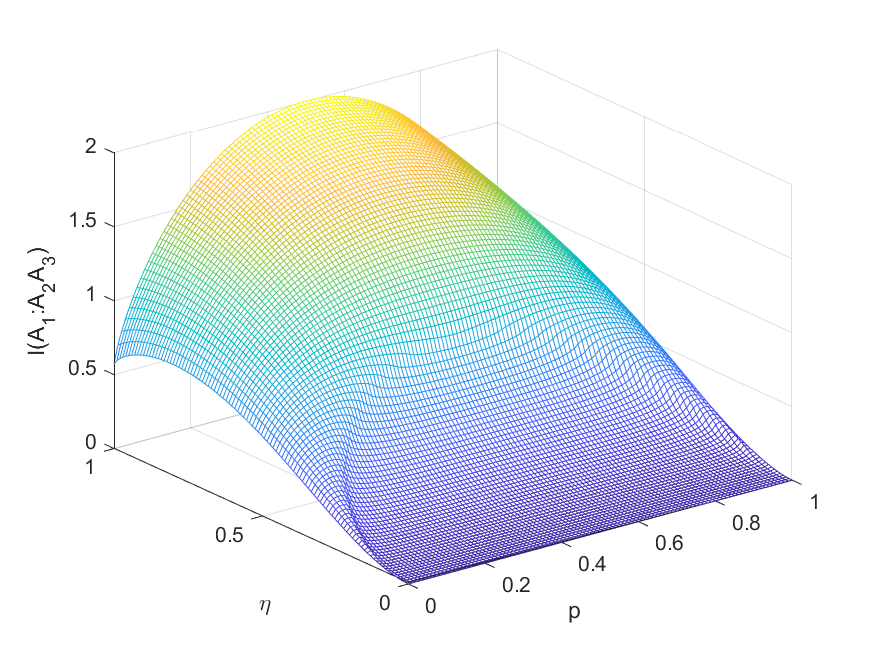}}
        \subfigure[]
    {\includegraphics[width=0.4\textwidth]{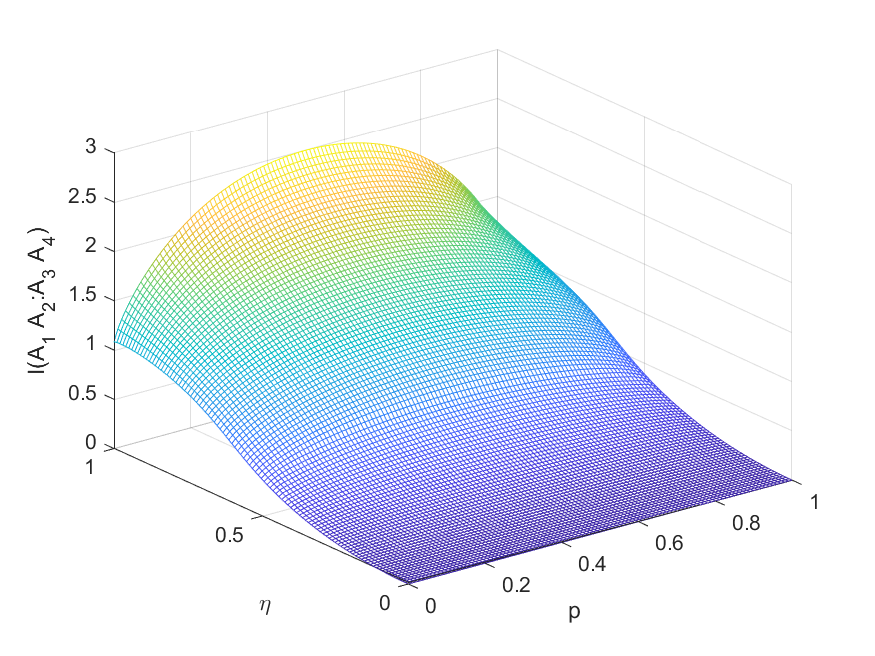}}
	\subfigure[]{\includegraphics[width=0.4\textwidth]{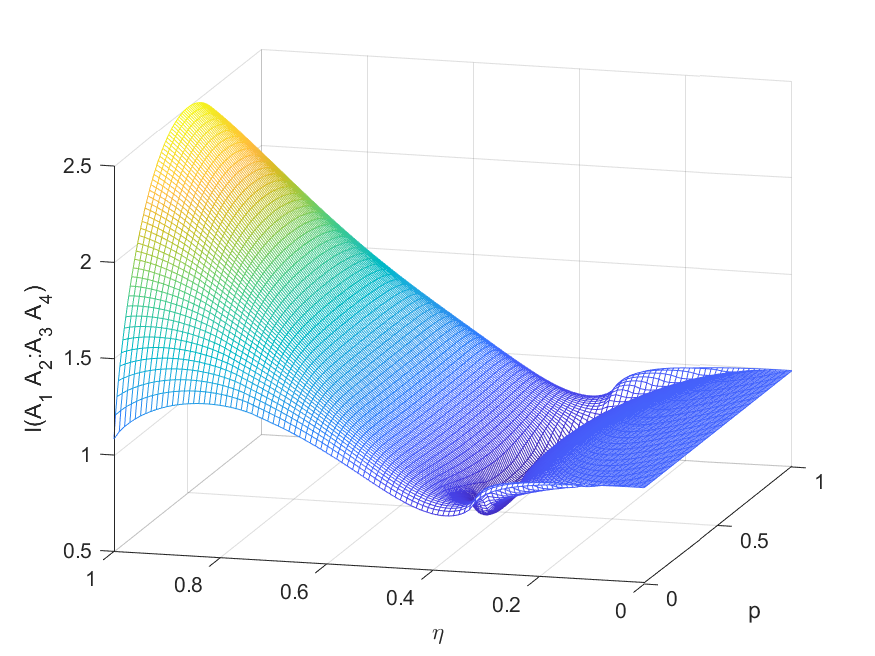}}
	\caption{{\label{Fig6}} \textbf{Mutual information between a single qubit at multiple times}. In each of the above graphs, $\eta$ is the depolarizing parameter of the channel $\E_{\eta}$ and $p$ is the parameter defining the initial state as in Example~\ref{ex:sqmt}. (a) $I(A_1:A_2A_3)$ for $n=3$ and $\N_1=\N_2=\E_{\eta}$.  (b) $I(A_1:A_2A_3)$ for $n=3$, $\N_1=\E_{\eta}$ and $\N_2=\text{id}$. (c) $I(A_1A_2:A_3A_4)$ for $n=4$, $\N_1=\N_3=\text{id}$ and $\N_2=\E_{\eta}$ (d) $I(A_1A_2:A_3A_4)$ for $n=4$, $\N_1=\N_3=\E_{\eta}$ and $\N_2=\text{id}$. In all cases, we find that the mutual information is maximized when $\eta=1$ and $p=0.5$, and minimized at 0 when $\eta=0$ for all $p$.}
\end{figure*}

\begin{remark}[Lack of monotonicity]
While the spatial quantum mutual information satisfies monotonicity, namely, $I(A:\Gamma(B))\leq I(A:B)$ for every local operation $\Gamma$ on the system $B$, local operations can either increase or decrease mutual information in time. In particular, while tracing out a subsystem $A_i$ of a multipartite spatial system $A_1\cdots A_n$ essentially discards the information contained in $A_i$, tracing out a subsystem $A_i$ of a multipartite \emph{dynamical} system $A_1\cdots A_n$ results in a redistribution of information across time to form a distinct dynamical system $A_1\cdots A_{i-1}A_{i+1}\cdots A_n$, where $A_{i-1}$ then has direct causal influence on $A_{i+1}$. As such, the local operation of tracing out a subsystem of a dynamical system $A_1\cdots A_n$ may result in an \emph{increase} in the mutual information between different timelike separated subsystems, thus resulting in a failure of monotonicity.
\end{remark}

\section{A single qubit at two times: The general case}\label{QBT2X}

For a single qubit at two-times we obtain the following theorem, whose proof can be found in Appendix~\ref{app:Ibounds}.

\bt \label{TMX387}
Let $R_{AB}$ be a 2-time PDM associated with a single qubit whose initial state is maximally mixed and which evolves according to a channel that is either unital or has Choi rank no greater than 2 between measurements. Then $0\leq I(A:B)\leq 1$.
\et

While we conjecture that $0\leq I(A:B)\leq 1$ holds for an arbitrary qubit at two times, our proof of Theorem~\ref{TMX387} relies on the hypotheses that the initial state is maximally mixed and that the channel corresponding to the evolution between measurements is either unital or has Choi rank no greater than 2. We note that it follows from Theorem~\ref{MTX1} that the bound $0\leq I(A:B)\leq 1$ also holds for arbitrary initial states provided that the evolution between measurements is unitary. 

Assuming $0\leq I(A:B)\leq 1$ holds for an arbitrary qubit at two times, such a feature of temporal correlations is in stark contrast with the case of two qubits separated in space, which can achieve a maximum mutual information of 2 in the presence of entanglement \cite{CeAd97}. This is a reflection of the fact that unlike the monogamy spatial correlations, a single qubit can be maximally correlated both with a qubit in its immediate future and it immediate past. In particular, given a 3-time measurement scenario, the qubit at time $t_2$ can violate CHSH inequalities both with the qubits at times $t_1$ and $t_3$, which is a consequence of the fact that two-time correlation functions associated with Pauli observables are independent of initial conditions \cite{BTCV04,Fritz10}.  

An interesting consequence of the bound $I(A:B)\leq 1$ for a qubit at two times, is that if a pair of spatially entangled qubits has a mutual information greater than 1, then it follows that the two qubits do not admit a dual description of being timelike separated. The converse to this statement however does not hold in general. In particular, a pair of spatially entangled qubits not admitting a dual description of being timelike separated can have mutual information less than one, as in the case of the entangled qubits described by the pure state $\sqrt{p}|00\rangle +\sqrt{1-p}|11\rangle$ for small $p$. 

We also note that it follows from Example~\ref{DSCXPRXP57} that the lower bound of $0$ for the mutual information $I(A:B)$ in the context of Theorem~\ref{TMX387} is achieved when a qubit evolves according to a completely depolarizing channel between measurements. This is consistent with the fact that all information about the initial measurement is lost in such a scenario. On the other hand, we know from Theorem~\ref{MTX1} that the upper bound of 1 for $I(A:B)$ is achieved when a maximally mixed qubit undergoes unitary evolution between the two measurements, which is consistent with the fact that no information is lost (or gained) under unitary evolution.

\section{Time-reversal Symmetry}\label{TRS}

For spatial mutual information it is always the case that $I(A:B)=I(B:A)$, as there is no preferred directionality in space between spacelike separated systems. In particular, applying the swap transformation $\mathscr{S}:\mathcal{L}(\mathcal{H}_A)\otimes \mathcal{L}(\mathcal{H}_B)\to \mathcal{L}(\mathcal{H}_B)\otimes \mathcal{L}(\mathcal{H}_A)$ to a bipartite density matrix $\rho_{AB}$ yields a density matrix $\rho_{BA}$ which is physically indistinguishable from $\rho_{AB}$, so that $I(A:B)=I(B:A)$. 

When $A$ and $B$ are timelike separated the situation is more subtle, as a quantum channel $\mathcal{N}$ establishing temporal correlations between $A$ and $B$ may be non-reversible due to noise. In such a case, the non-reversibility of $\mathcal{N}$ is reflected in the associated PDM $R_{AB}=\mathcal{N}\star \rho$ by the fact that the swap $\mathscr{S}(R_{AB})$ is not a PDM. However, when $\mathscr{S}(R_{AB})$ \emph{is} in fact a PDM, so that $\mathscr{S}(R_{AB})=R_{BA}=\mathcal{M}\star \sigma$, where $\sigma=\mathcal{N}(\rho)$ and $\mathcal{M}$ is a quantum channel from $B$ to $A$, then $\mathcal{M}$ is said to be a \emph{Bayesian inverse} of the channel $\mathcal{N}$ \emph{with respect to} the state $\rho$ \cite{FuPa22a}. As such, the notion of a Bayesian inverse provides a novel form of reversibility for timelike separated quantum systems, even when the systems are temporally correlated via a noisy channel. 

As solving for a Bayesian inverse $\mathcal{M}$ involves solving a linear system, the number of solutions is either 0,1 or $\infty$, with 0 corresponding to no Bayesian inverse existing, 1 corresponding to a unique Bayesian inverse and $\infty$ corresponding to the notion of Bayesian inverse being non-unique. However, in the case of non-uniqueness, given two Bayesian inverses $\mathcal{M}$ and $\mathcal{M}'$ of $\mathcal{N}$ with respect to $\rho$, then it follows from the definition of Bayesian inverse that $\mathcal{M}\star \sigma=\mathcal{M}'\star \sigma$, so that the notion of a PDM $R_{BA}$ corresponding to a time-reversal of the process described by $R_{AB}$ is well-defined.

In the context of mutual information in time, suppose we have a PDM $R_{AB}=\mathcal{N}\star \rho$ with $\mathcal{N}$ Bayesian invertible with respect to $\rho$, so that the swap $\mathscr{S}(R_{AB})=R_{BA}=\mathcal{M}\star \sigma$, where $\sigma=\mathcal{N}(\rho)$ and $\mathcal{M}$ is a quantum channel from $B$ to $A$ corresponding to a Bayesian inverse of $\mathcal{N}$ with respect to $\rho$. Now since the eigenvalues of a matrix are invariant under the swap map $\mathscr{S}$, it follows that $S(R_{AB})=S(R_{BA})$. Moreover, since $\sigma=\mathcal{N}(\rho)$ and $\rho=\mathcal{M}(\sigma)$, we have
\begin{align*}
I(B:A)&=S(\sigma)+S(\mathcal{M}(\sigma))-S(R_{BA}) \\
&=S(\rho)+S(\mathcal{N}(\rho))-S(R_{AB}) \\
&=I(A:B)\, .
\end{align*}
Thus, the existence of a Bayesian inverse of a channel establishes a symmetry in time which extends the  symmetry $I(A:B)=I(B:A)$ of spatial mutual information into the time domain.

\section{A Holevo Bound in Time} \label{HQT}

Suppose Alice sends a pure state $\rho^i$ to Bob with probability $p_i$, after which Bob performs a measurement on $\rho^i$ corresponding to a POVM $\{M_j\}$. Then the probability $q_{j|i}$ of the measurement outcome $M_j$ given the state $\rho^i$ is given by $q_{j|i}=\Tr[\rho^i M_j]$. If $X$ is a random variable associated with the probability distribution $p_i$, and $Y$ is a random variable associated with the probability distribution $q_j=\sum_{i}p_iq_{j|i}$, then the classical mutual information $I(X:Y)$ is bounded above by the \emph{Holevo quantity} $\chi$ given by 
\begin{equation}\label{HQXT457}
\chi=S\Big(\sum_{i}p_i\rho^i\Big)-\sum_{i}p_iS(\rho^i)\, .
\end{equation}
The inequality $I(X:Y)\leq \chi$ is referred to as the \emph{Holevo bound}~\cite{Holevo73}, and thus provides a fundamental limit on the amount of classical information which may be extracted from an ensemble of quantum states. Moreover, if we form the joint classical-quantum state $\rho_{AB}$ associated with Alice's classical register and Bob's random acquisition of the state $\rho^i$, namely, 
\begin{equation}\label{CQTXPDM67}
\rho_{AB}=\sum_{i}p_i |i\rangle \langle i | \otimes \rho^i\, ,
\end{equation}
then it is straightforward to show that the mutual information $I(A:B)$ associated with $\rho_{AB}$ is such that $I(A:B)=\chi$. This, together with the Holevo bound, provides an operational interpretation for the quantum mutual information $I(A:B)$. In this section, we obtain analogous results for mutual information in time.

For this, we first note that from a dynamical perspective, the state $\rho_{AB}$ is in fact the PDM given by $\rho_{AB}=\mathcal{P}\star \rho_{A}$, where $\mathcal{P}$ is the channel given by $\mathcal{P}(|i\rangle \langle j|)=\delta_{ij}\rho^i$ and $\rho_{A}$ is the state given by $\rho_{A}=\sum_{i}p_i|i\rangle \langle i|$. In such a 2-time scenario, the channel $\mathcal{P}$ is then viewed as a channel from Alice to Bob. If Bob then makes two sequential measurements on $\rho^i$, with the system evolving according to a channel $\mathcal{N}$ between measurements, then it follows from \eqref{eq:multitimePDM} that the associated 3-time PDM (with $\mathcal{N}_1=\mathcal{P}$ and $\mathcal{N}_2=\mathcal{N}$) is then given by 
\begin{equation}\label{CLXQXS57}
R_{AB_1B_2}=\sum_{i}p_i |i\rangle \langle i|\otimes R_{12}^{i}\, ,
\end{equation}
where $R_{12}^{i}=\mathcal{N}\star \rho^i$ is the 2-time PDM associated with Bob's sequential measurements. We then obtain the following:

\begin{theorem} \label{ABX2TX79}
Let $R_{12}^{i}$ be the 2-time PDM associated with Bob's sequential measurements of $\rho^i$, and let $I(A:B_1B_2)$ be the mutual information associated with the classical-quantum PDM $R_{AB_1B_2}$ given by \eqref{CLXQXS57}.
Then the following statements hold.
\begin{enumerate}[i.]
\item \label{ABX2TX791}
$I(A:B_1B_2)=S\Big(\sum_{i}p_iR^i_{12}\Big)-\sum_{i}p_iS(R^i_{12})$.
\item \label{ABX2TX792}
If Bob's system evolves unitarily between measurements, then $I(A:B_1B_2)=\chi$, where $\chi$ is the Holevo quantity \eqref{HQXT457}.
\end{enumerate}
\end{theorem}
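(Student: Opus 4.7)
The plan is to use the additivity and orthogonal convexity properties of the Hermitian entropy $S$ stated in the introduction, combined with Theorem~\ref{MTX1}, to reduce both parts to bookkeeping.

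For part~\ref{ABX2TX791}, I first compute the marginals of $R_{AB_1B_2}=\sum_i p_i|i\rangle\langle i|\otimes R^i_{12}$: tracing out $B_1B_2$ gives $R_A=\sum_i p_i|i\rangle\langle i|$, which is diagonal and therefore has $S(R_A)=H(p)$, while tracing out $A$ gives $R_{B_1B_2}=\sum_i p_i R^i_{12}$. The key observation is that the family $\{|i\rangle\langle i|\otimes R^i_{12}\}_i$ consists of mutually orthogonal Hermitian matrices of unit trace: each summand has trace $\Tr[|i\rangle\langle i|]\Tr[R^i_{12}]=1$, and distinct summands have orthogonal supports because the $|i\rangle\langle i|$ do. Orthogonal convexity therefore yields
\[
S(R_{AB_1B_2})=H(p)+\sum_i p_i S\bigl(|i\rangle\langle i|\otimes R^i_{12}\bigr)\, ,
\]
and additivity of $S$ together with $S(|i\rangle\langle i|)=0$ simplifies each summand to $S(R^i_{12})$. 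Substituting the three entropies into the definition of $I(A:B_1B_2)=S(R_A)+S(R_{B_1B_2})-S(R_{AB_1B_2})$ causes the $H(p)$ contributions to cancel and leaves exactly the claimed formula.

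For part~\ref{ABX2TX792}, I apply Theorem~\ref{MTX1} twice. First, for each $i$ the PDM $R^i_{12}=\mathcal{U}\star\rho^i$ satisfies $S(R^i_{12})=S(\rho^i)$ by \eqref{UNTXDXS347}, so $\sum_i p_i S(R^i_{12})=\sum_i p_i S(\rho^i)$. Second, the spatiotemporal product is linear in its initial-state argument, so $\sum_i p_i R^i_{12}=\mathcal{U}\star\bigl(\sum_i p_i\rho^i\bigr)$, and applying \eqref{UNTXDXS347} to the averaged state $\sum_i p_i\rho^i$ yields $S\bigl(\sum_i p_i R^i_{12}\bigr)=S\bigl(\sum_i p_i\rho^i\bigr)$. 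Plugging these two identities into part~\ref{ABX2TX791} reproduces exactly the Holevo quantity $\chi$ of \eqref{HQXT457}.

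The only step requiring care is justifying the appeal to orthogonal convexity, which hinges on the mutual orthogonality of $\{|i\rangle\langle i|\otimes R^i_{12}\}_i$; once that is noted, both parts reduce to direct cancellations. I do not anticipate a genuine technical obstacle, since orthogonal convexity and additivity of $S$ do not require the $R^i_{12}$ to be positive and therefore accommodate arbitrary two-time PDMs on Bob's side, including the non-positive ones that are the central novelty of the formalism.
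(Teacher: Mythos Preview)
Your proposal is correct and follows essentially the same route as the paper: compute the two marginals, apply orthogonal convexity and additivity of $S$ to obtain $S(R_{AB_1B_2})=H(p)+\sum_i p_i S(R^i_{12})$, cancel the $H(p)$ terms for part~\ref{ABX2TX791}, and then for part~\ref{ABX2TX792} invoke \eqref{UNTXDXS347} together with linearity of the spatiotemporal product in the initial state. Your explicit justification that the summands $|i\rangle\langle i|\otimes R^i_{12}$ are mutually orthogonal unit-trace Hermitian matrices is a welcome clarification, but otherwise the argument coincides with the paper's.
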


\begin{proof}
First, we have $\Tr_{B_1B_2}\big[R_{AB_1B_2}\big]=\sum_{i}p_i |i\rangle \langle i|$ and
$\Tr_{A}\big[R_{AB_1B_2}\big]=\sum_{i}p_i R_{12}^{i}$,
so that
\begin{equation} \label{MXI57}
I(A:B_1B_2)=H(p)+S\Big(\sum_{i}p_i R_{12}^{i}\Big)-S(R_{AB_1B_2})\, ,
\end{equation}
where $H(p)$ denotes the Shannon entropy associated with the probability distribution $p_i$. Now since $S$ is additive and orthogonally convex (as defined in the introduction), we have
\begin{align*}
S(R_{AB_1B_2})&=S\Big(\sum_{i}p_i |i\rangle \langle i|\otimes R_{12}^{i}\Big) \\
&=H(p)+\sum_{i}p_iS\Big(|i\rangle \langle i|\otimes R_{12}^{i}\Big) \\
&=H(p)+\sum_{i}p_i\left(S(|i\rangle \langle i|)+S(R_{12}^{i})\right) \\
&=H(p)+\sum_{i}p_iS(R_{12}^{i})\, . 
\end{align*} 
Substituting $S(R_{AB_1B_2})=H(p)+\sum_{i}p_iS(R_{12}^{i})$ into equation \eqref{MXI57} then yields
\begin{equation}\label{EXS71}
I(A:B_1B_2)=S\Big(\sum_{i}p_iR_{12}^{i}\Big)-\sum_{i}p_iS(R_{12}^{i})\, ,
\end{equation}
which proves item \ref{ABX2TX791}. Now if Bob's system evolves according to a unitary channel $\mathcal{U}$ between measurements, so that $R^i_{12}=\mathcal{U}\star \rho^i$, it follows from \eqref{UNTXDXS347} that $S(R^i_{12})=S(\rho^i)$. Moreover, as the spatiotemporal product $\star$ is linear in $\rho$, we have
\begin{align*}
S\Big(\sum_{i}p_iR_{12}^{i}\Big)&=S\Big(\sum_{i}p_i(\mathcal{U}\star \rho^{i})\Big)=S\Big(\mathcal{U}\star \sum_{i}p_i \rho^i\Big) \\
&=S\Big(\sum_{i}p_i \rho^i\Big)\, ,
\end{align*}
where the final equation follows again from \eqref{UNTXDXS347}. It then follows from equation \eqref{EXS71} that $I(A:B_1B_2)=\chi$, which proves item \ref{ABX2TX792}.
\end{proof}

Now let $X$ be the random variable associated with Alice's classical register, and let $(Y_1,Y_2)$ be the bivariate random variable corresponding to the joint statistics of Bob's classical register when performing sequential measurements of the POVMs $\{M_j\}$ and $\{N_k\}$. We assume that after a measurement outcome of $M_j$ when measuring $\rho^i$ that Bob's system updates to  
$U\sqrt{M_j}\rho^i\sqrt{M_j}U^{\dag}$ (suitably normalized) for some fixed unitary $U$ which is independent of $i$ and $j$. It then follows that the joint distribution $p_{jk}$ corresponding to the bivariate random variable $(Y_1,Y_2)$ is given by $p_{jk}=\sum_{i}p_{jk|i}$, where the conditional distribution $p_{jk|i}$ is given by
\[
p_{jk|i}=\Tr\Big[U\sqrt{M_j}\hspace{0.25mm}\rho^i\sqrt{M_j}U^{\dag}N_k\Big]\, .
\]
Now since the mapping $\mathcal{N}:\mathcal{L}(\mathcal{H}_A)\to \mathcal{L}(\mathcal{H}_{Z_1})\otimes \mathcal{L}(\mathcal{H}_{Z_2})$ given by 
\[\mathcal{N}(\rho)=\sum_{j,k}\Tr\Big[U\sqrt{M_j}\hspace{0.25mm}\rho \sqrt{M_j}U^{\dag}N_k\Big] |j\rangle \langle j|\otimes |k\rangle \langle k|
\]
is a quantum channel, it follows from Lemma~\ref{LXMHXS57} of Appendix~\ref{HLSX} that the mutual information $I(X:Y_1Y_2)$ between Alice's classical register $X$ and the bivariate register $(Y_1,Y_2)$ corresponding to Bob's sequential measurements is also bounded above by the Holevo quantity $\chi$. Thus, it follows from Theorem~\ref{ABX2TX79} that
\[
I(X:Y_1Y_2)\leq I(A:B_1B_2)\, ,
\]
where again $I(A:B_1B_2)$ is the mutual information in time associated with the classical-quantum PDM $R_{AB_1B_2}$ given by \eqref{CLXQXS57}. By taking a supremum over all possible POVMs $\{M_j\}$ and $\{N_k\}$, it then follows that the amount of classical information that may be extracted from such sequential measurements is bounded above by the mutual information in time $I(A:B_1B_2)$.

\section{Concluding Remarks}

In this work, we have employed the pseudo-density matrix formalism to define an extension of quantum mutual information into the time domain. Such a measure of quantum information naturally quantifies correlations between timelike separated quantum systems, and recovers the usual spatial mutual information for systems which admit a dual description of either being timelike or spacelike separated. We have proved such a notion of mutual information in time satisfies various properties one would expect from such a dynamical measure of information, highlighting the ways in which quantum correlations distinguish between space and time. 

We note that while an alternative formulation of quantum mutual information for timelike separated systems has recently appeared in \cite{Glorioso_24}, such a quantity is defined in terms of optimizing a relative entropy over all possible system-ancilla coupling schemes, resulting in a quantity which often diverges, even for systems consisting of a single qubit. From the dynamical perspective espoused in this work, the limitations of the aforementioned approach stem from the fact that a spatial notion of relative entropy is being utilized, rather than a dynamical extension of relative entropy using the entropy function $S(X)=-\Tr[X\log|X|]$.

As for the quantum mutual information in time defined in this work, we have shown that a single qubit at two times can never exceed a mutual information of $1$. This is consistent with the fact that for two qubits to be correlated in time there must necessarily be a causal transfer of information between the two qubits via a quantum channel, and hence two timelike separated qubits can share at most one bit of information. An interesting consequence of such an upper bound for the mutual information between a qubit at two times, is that if two spatially separated qubits have mutual information exceeding 1, then it necessarily follows that the correlations between such qubits may not be realized in a dynamical setting where the two qubits are separated in time. 

Another interesting aspect of quantum mutual information in time is that in all known 2-time examples it is non-negative, despite the fact that the entropy function $S(X)=-\Tr[X\log|X|]$ is not subadditive on general hermitian matrices of unit trace. In particular, in Ref.~\cite{FuPa23}, an example of a unit trace bipartite Hermitian matrix $\rho_{AB}\in \mathcal{L}(\mathcal{H}_A\otimes \mathcal{H}_B)$ was constructed whose marginals are density matrices, and yet 
\be \label{FSXA57}
S(\rho_A)+S(\rho_B)-S(\rho_{AB})<0\, .
\ee
In such a case, $\rho_{AB}$ is of the form $\rho_{AB}=\mathcal{N}\star \rho_A$ with $\mathcal{N}$ \emph{not} a CPTP map, which the entropy function $S$ seems to be detecting via the failure of subadditivity \eqref{FSXA57}. 

Finally, it is worth noting that quantum mutual information in time may be used to define a notion of channel capacity for a quantum channel $\mathcal{N}$ which is directly analogous to Shannon's original definition in the classical case, i.e., by taking a supremum of mutual information with respect to all possible input states for the channel $\mathcal{N}$. It would be interesting then to compare such a notion of channel capacity with other notions of quantum capacity \cite{Lloyd_1997,Devetak_2005,Holevo_2020}, especially since such a quantity is often computable.  

\section{Acknowledgements}

VV acknowledges support from the Templeton and the Gordon and Betty Moore foundations, and WZ would like to thank the Institute of Mathematical Science at Hainan University for hospitality during the course of this project.

\bibliography{references}

\clearpage
\newpage

\title{Methods}
\author{testing}

\maketitle
\onecolumngrid
\vspace{1cm}

\begin{center}\large \textbf{Quantum Mutual Information in Time} \\
\textbf{--- Supplementary Material ---}\\
\end{center}

\appendix

\section{Proof of Theorem~\ref{TMX387}}
\label{app:Ibounds}

Let $A$ and $B$ denote quantum systems corresponding to a single qubit at times $t_A$ and $t_B$ with $t_A<t_B$, and suppose the system which is initially in the state $\rho$ evolves according to a quantum channel $\mathcal{N}:\mathcal{L}(\mathcal{H}_A)\to \mathcal{L}(\mathcal{H}_B)$ between measurements at times $t_A$ and $t_B$. In such a case, the mutual information $I(A:B)$ between the qubit at times $t_A$ and $t_B$ is given by
\begin{equation}\label{MXS89}
I(A:B)=S(\rho)+S(\mathcal{N}(\rho))-S(R_{AB})\, ,
\end{equation}
where 
\begin{equation}\label{PDXMS17}
R_{AB}=\mathcal{N}\star \rho\equiv \frac{1}{2}\Big\{\rho\otimes \mathds{1}\,, \mathscr{J}[\mathcal{N}]\Big\} \quad , \quad \mathscr{J}[\N]=\sum_{i,j}|i\rangle \langle j |\otimes \N(|j\rangle \langle i |)\, ,
\end{equation}
and
\[
S(X)=-\Tr\Big[X\log|X|\Big]
\]
for all Hermitian matrices $X$\, . In this section, we will prove Theorem~\ref{TMX387}, which states that if $\rho=\mathds{1}/2$ is maximally mixed, and if $\mathcal{N}$ is either unital or of Choi rank no more than 2, then $0\leq I(A:B)\leq 1$. We note that under the assumption that $\rho=\mathds{1}/2$ is maximally mixed, it follows from \eqref{PDXMS17} that
\begin{equation}\label{PDMXM67}
R_{AB}=\frac{1}{2}\mathscr{J}[\N]\, 
\end{equation}
for every channel $\mathcal{N}:\mathcal{L}(\mathcal{H}_A)\to \mathcal{L}(\mathcal{H}_B)$.

\subsection{Reduction to \texorpdfstring{$\mathcal{N}_d$}{}}   

Let $\mathcal{N}:\mathcal{L}(\mathcal{H}_A)\to \mathcal{L}(\mathcal{H}_B)$ be a quantum channel. As an arbitrary density matrix $\rho\in \mathcal{L}(\mathcal{H}_A)$ may be written with respect to the Pauli basis $\{\sigma_i\}_{i=0}^{3}$ as
\[
\rho = \frac{1}{2}(I + \Vec{r} \cdot \vec{\sigma})
\]
with $\vec{r} \in \mathbb{R}^3$ and $|\vec{r}|\leq 1$, it can be shown that the quantum channel $\mathcal{N}$ takes the form
\[
\mathcal{N}(\rho) =  \frac{1}{2}\bigg[I + \bigg(\textbf{t}+\textbf{T}(\Vec{r})\bigg) \cdot \vec{\sigma}\bigg]\, .
\]
where $\textbf{t} \in \mathbb{R}^3$ and $\textbf{T} \in \mathbb{R}^{3\times 3}$. Thus, a matrix representation $N$ of the channel $\mathcal{N}$ with respect to the Pauli basis is of the form \cite{RSW02}
\begin{equation} \label{MTXRPX37}
    N = 
    \begin{pmatrix}
        1 & \textbf{0} \\
        \textbf{t} & \textbf{T}
    \end{pmatrix}\, ,
\end{equation}
Moreover, one can show that there exist unitary channels $\mathcal{U}$ and $\mathcal{V}$ such that
 \begin{equation}\label{unitaryinvariant}
    \mathcal{N}= \mathcal{V}\circ \mathcal{N}_{d} \circ \mathcal{U}\, ,
\end{equation}
where the matrix representation of $\mathcal{N}_{d}$ is given by 
\begin{equation}\label{diag}
    N_d = \left(
    \begin{array}{cccc}
        1 & 0 & 0 & 0 \\
        t_1 & \lambda_1 & 0 & 0 \\
        t_2 & 0 & \lambda_2 & 0 \\
        t_3 & 0 & 0 & \lambda_3 \\
    \end{array}
    \right)\, .
\end{equation}

Now let $\rho=\mathds{1}/2$ be the maximally mixed state, and let $R_{AB}$ and $R^{(d)}_{AB}$ be the PDMs given by $R_{AB}=\mathcal{N}\star \rho$ and $R^{(d)}_{AB}=\mathcal{N}_d\star \rho$. Since $\mathcal{N}= \mathcal{V}\circ \mathcal{N}_{d} \circ \mathcal{U}$, the covariance property of PDMs \cite{PFBC23,Fu23a} yields
\begin{equation}\label{CVX87}
(\mathcal{U}^{*}\otimes \mathcal{V})(\mathcal{N}_d\star \rho)=\mathcal{N}\star \mathcal{U}^*(\rho)\, ,
\end{equation}
where $\mathcal{U^*}=\mathcal{U}^{-1}$ is the Hilbert-Schmidt adjoint of $\mathcal{U}$. If $U\in \mathcal{L}(\mathcal{H}_A)$ is the unitary operator such that $\mathcal{U}(a)=UaU^{\dag}$ for all $a\in \mathcal{L}(\mathcal{H}_A)$, then it follows that
\[
\mathcal{U}^*(\rho)=U^{\dag}\rho U=U^{\dag} (\mathds{1}/2)U=(\mathds{1}/2)U^{\dag}U=\rho\, ,
\]
thus by \eqref{CVX87} we have
\[
(\mathcal{U}^{*}\otimes \mathcal{V})(R^{(d)}_{AB})=(\mathcal{U}^{*}\otimes \mathcal{V})(\mathcal{N}_d\star \rho)=\mathcal{N}\star \rho=R_{AB}\, .
\]
Moreover, since $\mathcal{U}(\rho)=\rho$ we have $\mathcal{N}(\rho)=\mathcal{V}(\mathcal{N}_d(\mathcal{U}(\rho)))=\mathcal{V}(\mathcal{N}_d(\rho))$. It then follows by the unitary-invariance of $S$ that $S(R_{AB})=S(R^{(d)}_{AB})$ and $S(\mathcal{N}(\rho))=\mathcal{S}(\mathcal{N}_d(\rho))$. Now let $I(A:B)$ and $I(A:B)^{(d)}$ denote the mutual information associated with $R_{AB}$ and $R_{AB}^{(d)}$ respectively. We then have
\[
I(A:B)=S(\rho)+S(\mathcal{N}(\rho))-S(R_{AB})=S(\rho)+S(\mathcal{N}_d(\rho))-S(R_{AB}^{(d)})=I(A:B)^{(d)}\, ,
\] 
thus $I(A:B)=I(A:B)^{(d)}$ for any qubit channel $\mathcal{N}$ with initial state $\rho=\mathds{1}/2$. As such, when proving Theorem~\ref{TMX387}, we may compute $I(A:B)$ in terms of $I(A:B)^{(d)}$, which we'll refer to as ` reduction to $\mathcal{N}_d$'. 

\subsection{Unital channels}

In this section, we consider the case when $\mathcal{N}$ is a unital channel. Before proceeding with the proof, we recall that a real vector $\vec{y}=(y_1,\cdots,y_n)$ is said to \emph{majorize} a real vector $\vec{x}=(x_1,\cdots,x_n)$, written $\vec{x} \prec \vec{y}$ if and only if $\sum_{i=1}^k x_{[i]} \leq \sum_{i=1}^k y_{[i]}$ for $k=1,\cdots, n-1$, where $x_{[1]}\geq \cdots \geq x_{[n]}$ denotes the components of $\vec{x}$ in decreasing order and $y_{[1]}\geq \cdots\geq y_{[n]}$ denotes the components of $\vec{y}$ in decreasing order. A function $f$ is then said to be \emph{Schur-Concave} if and only if $x \prec y\implies f(\vec{x}) \geq f(\vec{y})$ \cite{MOA10}.

So now suppose $\mathcal{N}$ is a unital channel, in which case the vector $\textbf{t}$ in \eqref{MTXRPX37} is zero. This then implies that $\mathcal{N}_d$ is a Pauli channel, i.e., $\mathcal{N}_d(\rho) = \sum_{i=0}^3 p_i \sigma_i \rho \sigma_i$ for some probability vector $(p_0,p_1,p_2,p_3)$ and for all $\rho\in \mathcal{L}(\mathcal{H}_A)$. Since $\mathcal{N}_d$ is unital, it follows that $\mathcal{N}_d(\rho)=\rho$ for $\rho=\mathds{1}/2$ maximally mixed, so that the associated mutual information $I(A:B)$ is then given by
\begin{equation}\label{MXSTX71}
I(A:B)=2-S(R_{AB})\, .
\end{equation}
As for the associated PDM $R_{AB}=\mathcal{N}_d\star \rho$, we have
\begin{equation}
    R_{AB}\overset{\eqref{PDMXM67}}= \frac{1}{2} \mathscr{J}[\mathcal{N}] = \frac{1}{2}\left(
    \begin{array}{cccc}
        p_0 + p_3 & 0 & 0 &  p_1-p_2 \\
        0 & p_1+p_2 & p_0-p_3 & 0 \\ 
        0 & p_0-p_3& p_1+p_2 & 0 \\
        p_1-p_2 & 0 & 0 & p_0+p_3 \\
    \end{array}
    \right)\, ,
\end{equation}
from which it follows that the eigenvalues of $R_{AB}$ are 
\[
\left(\frac{p_1+p_2 \pm |p_0-p_3|}{2}\,, \frac{p_0+p_3 \pm |p_1-p_2|}{2}\right) = \left(\frac{1}{2} - p_0\,, \frac{1}{2} - p_1\,,\frac{1}{2} - p_1\,,\frac{1}{2} - p_3\right)\, .
\]

We now consider two cases:
\begin{itemize}
    \item[(1)] If all $p_i\leq 1/2$ for all $i$, then the eigenvalues of $R_{AB}$ are all non-negative and $R_{AB}$ is a dual state, thus $S(R_{AB}) \leq \log(4)= 2$. We now show $S(R_{AB}) \geq 1$. For this, let $\vec{\lambda}=(\lambda_1,\ldots,\lambda_4)$ denote the probability vector of eigenvalues of $R_{AB}$ with $\lambda_1\geq \cdots \geq \lambda_4$. Since $p_i\leq 1/2$ for $i=0,...,3$ it follows that $\lambda_j\leq 1/2$ for $j=1,...,4$, thus the probability vector $\vec{y} = (1/2,1/2,0,0)$ majorizes $\vec{\lambda}$, hence $1 = H(\vec{y}) \leq H(\vec{\lambda})=S(R_{AB})$ since the Shannon entropy is Schur-Concave. It then follows that $0\leq I(A:B)\leq 1$ by \eqref{MXSTX71}. 
    \item[(2)] If there exists a $p_i > 1/2$, with loss of generality, assume that $p_0 >1/2$ and $p_i<1/2$ for $i=1,2,3$. We then have 
    \begin{equation} \label{ENTXSR87}
\begin{aligned}
    S(R_{AB}) & = -\left(\frac{1}{2}-p_0\right)\log\left(p_0-\frac{1}{2}\right)- \sum_{i=1}^3 \left(\frac{1}{2}-p_i\right)\log\left(\frac{1}{2}-p_i\right) \\
    & = \left(p_0-\frac{1}{2}\right)\log\left(p_0-\frac{1}{2}\right) - \left(\frac{1}{2}+p_0\right)\log\left(\frac{1}{2}+p_0\right) - \left(\frac{1}{2}+p_0\right)\sum_{i=1}^3 \frac{\frac{1}{2}-p_i}{\frac{1}{2}+p_0} \log\bigg( \frac{\frac{1}{2}-p_i}{\frac{1}{2}+p_0}\bigg)\\
    & =: S(R_{AB})_{p_0}\, ,
\end{aligned}
\end{equation}
where we use the notion $S(R_{AB})_{p_0}$ when we want to emphasize that we may view $S(R_{AB})$ as a function of $p_0$. We now wish to bound $S(R_{AB})$ from above and below. To obtain a minimum value of $S(R_{AB})$, we only need compute the minimum value of $S(R_{AB})$ for every fixed $p_0$, i.e., 
\[
\min S(R_{AB}) = \min_{p_0} \min_{p_1,p_2,p_3} S(R_{AB})_{p_0}.
\]
Now for a fixed $p_0$, it follows from \eqref{ENTXSR87} that $S(R_{AB})_{p_0}$ is minimized by finding $p_1,p_2,p_3$ which minimize the entropy of the vector $\vec{v} = (\frac{1/2-p_1}{p_0+1/2}, \frac{1/2-p_2}{p_0+1/2}, \frac{1/2-p_3}{p_0+1/2})$. As $p_i < 1/2$ for $i=1,2,3$, it follows that the vector $\vec{u} = (\frac{p_0-1/2}{p_0+1/2}, \frac{1/2}{p_0+1/2}, \frac{1/2}{p_0+1/2}),$ i.e. $p_1 = 1-p_0$ and $p_2 = p_3 = 0$, majorizes $\vec{v}$, hence 
\[
-\frac{p_0-1/2}{p_0+1/2} \log \left(\frac{p_0-1/2}{p_0+1/2}\right) - \frac{1}{p_0+1/2}\log \left(\frac{1/2}{p_0+1/2}\right) = H(\vec{u}) \leq H(\vec{v})\, , 
\]
so that 
\begin{equation}
    S(R_{AB})_{p_0} \geq \left(p_0-\frac{1}{2}\right)\log\left(p_0-\frac{1}{2}\right) - \left(p_0-\frac{1}{2}\right)\log\left(p_0-\frac{1}{2}\right) + 1 = 1\, .
\end{equation}
Therefore, for every fixed $p_0$, we always have $S(R_{AB})_{p_0} \geq 1$, thus $S(R_{AB}) \geq 1$. To bound $S(R_{AB})$ from above, notice that the vector $\vec{v}$ majorizes the vector $\vec{w} = (1/3, 1/3, 1/3)$ and $\vec{w}$ can always be attained by choosing $p_i = \frac{1-p_0}{3}, i=1,2,3$ for every fixed $p_0 > 1/2$, hence
\[
S(R_{AB})_{p_0} \leq  \left(p_0-\frac{1}{2}\right)\log\left(p_0-\frac{1}{2}\right) - \left(\frac{1}{2}+p_0\right)\log\left(\frac{1}{2}+p_0\right) + \left(\frac{1}{2}+p_0\right) \log(3)\, .
\]
Moreover, since
\[
\frac{d}{dp_0} S(R_{AB})_{p_0} =  \log\left(p_0-\frac{1}{2}\right)-\log\left(p_0+\frac{1}{2}\right) + \log(3) \leq 0\, ,
\]
it follows that $S(R_{AB}) \leq S(R_{AB})_{1/2} \leq \log(3)$, thus $2-\log(3)\leq I(A:B)\leq 1$ by \eqref{MXSTX71}. 
\end{itemize}
By reduction to $\mathcal{N}_d$, it follows that if $\mathcal{N}$ is unital, then $0\leq I(A:B)\leq 1$, as desired. We note that in such a case we have $I(A:B) = 0$ if and only if $\mathcal{N}_d$ is the completely depolarizing channel, and $I(A:B) = 1$ if and only if there are at least two $p_i$ in $\mathcal{N}_d$ equal to zero, i.e., when the Choi rank of $\mathcal{N}_d$ is no more than 2. 

\subsection{Channels of Choi rank no greater than 2}

In this section, we consider the case when the Choi rank of $\mathcal{N}$ is no greater than 2. In such a case, one may show that for the matrix representation $N_d$ of the channel $\mathcal{N}_d$ as given by \eqref{diag}, we have that $t_1=t_2=0$, $\lambda_3 = \lambda_1\lambda_2$ and $t_3^2 = (1-\lambda_1^2)(1-\lambda_2^2)$. Therefore, let $\lambda_1 = \cos{u}$ and $\lambda_2 = \cos{v}$, so that the matrix $N_d$ may be written in the form
\begin{equation}
    N_d = 
    \begin{pmatrix}
        1 & 0 & 0 & 0 \\
        0 & \cos{u} & 0 & 0 \\
        0 & 0 & \cos{v} & 0 \\
        \sin{u}\sin{v} & 0 & 0 & \cos{u}\cos{v} \\
    \end{pmatrix},
\end{equation}
with $u\in [0,2\pi), v\in [0,\pi)$. It is then straightforward to verify that with respect to this parameterization, $\mathcal{N}_d$ admits the Kraus representation given by \cite{RSW02} 
\[
\mathcal{N}_d(\rho)=K_+ \rho K_+^{\dag}+K_- \rho K_-^{\dag}\, ,
\]
where
\begin{equation}\label{Kraus}
\begin{aligned}
    K_+ &= \bigg[\cos{\frac{1}{2}v}\cos{\frac{1}{2}u}\bigg] I + \bigg[\sin{\frac{1}{2}v}\sin{\frac{1}{2}u}\bigg] \sigma_z = 
    \begin{pmatrix}
        \cos{\frac{v-u}{2}} & 0 \\
        0 & \cos{\frac{v+u}{2}}
    \end{pmatrix},\\
    K_- &= \bigg[\sin{\frac{1}{2}v}\cos{\frac{1}{2}u}\bigg] \sigma_x -i \bigg[\cos{\frac{1}{2}v}\sin{\frac{1}{2}u}\bigg] \sigma_y = 
    \begin{pmatrix}
        0 & \sin{\frac{v-u}{2}} \\
        \sin{\frac{v+u}{2}} & 0 
    \end{pmatrix}.
\end{aligned}
\end{equation}

Since $\mathcal{N}_d$ is non-unital, we have $\sin{u}\sin{v}>0$, so that for $\rho=\mathds{1}/2$ the PDM $R_{AB}=\mathcal{N}_d\star \rho$ is then given by
\begin{equation}
    R_{AB} \overset{\eqref{PDMXM67}}=\frac{1}{2} \mathscr{J}[\mathcal{N}] = \left(
    \begin{array}{cccc}
        \cos^2{\frac{v-u}{2}} & 0 & 0 & \sin{\frac{v-u}{2}}\sin{\frac{v+u}{2}} \\
        0 & \sin^2{\frac{v+u}{2}} & \cos{\frac{v-u}{2}}\cos{\frac{v+u}{2}} & 0 \\
        0 & \cos{\frac{v-u}{2}}\cos{\frac{v+u}{2}} & \sin^2{\frac{v-u}{2}} & 0 \\
        \sin{\frac{v-u}{2}}\sin{\frac{v+u}{2}} & 0 & 0 & \cos^2{\frac{v+u}{2}}
    \end{array}
    \right)\, ,
\end{equation}
hence the eigenvalues of $R_{AB}$ are 
\[
\left(\frac{1}{2},\frac{1}{2},\frac{\cos{u}\cos{v}}{2},-\frac{\cos{u}\cos{v}}{2}\right)\, .
\]
It then follows that $S(R_{AB})=S(\rho)=1$, which yields
\[
I(A:B)=S(\rho)+S(\mathcal{N}_d(\rho))-S(R_{AB})=S(\mathcal{N}_d(\rho))\, .
\]
Since $\mathcal{N}_d(\rho)$ is a density matrix $0\leq S(\mathcal{N}_d(\rho))\leq 1$, thus $0\leq I(A:B)\leq 1$. By reduction to $\mathcal{N}_d$, it follows that if the Choi rank of $\mathcal{N}$ is no greater than 2, then $0\leq I(A:B)\leq 1$, as desired. 

\section{Holevo bound for sequential measurements}\label{HLSX}

In this section, we prove a lemma which implies that the Holevo bound also holds for \emph{sequential} measurements on an ensemble of quantum states. 

\begin{lemma}\label{LXMHXS57}
Let $X$ be a random variable whose values are associated with a probability distribution $p_i$, let $p_{jk|i}$ be a family of bivariate probability distributions conditioned on $i$, and suppose there exists a quantum channel $\mathcal{N}:\mathcal{L}(\mathcal{H}_A)\to \mathcal{L}(\mathcal{H}_{Z_1})\otimes \mathcal{L}(\mathcal{H}_{Z_2})$ and a collection of density matrices $\{\rho^i\}\subset \mathcal{L}(\mathcal{H}_A)$ such that $\mathcal{N}(\rho^i)=\sum_{j,k}p_{jk|i} |j\rangle \langle j|\otimes |k\rangle \langle k|$ for all $i$.Then for every bivariate random variable $(Y_1,Y_2)$ whose values are associated with the bivariate distribution $p_{jk}=\sum_{i}p_ip_{jk|i}$, 
\[
I(X:Y_1Y_2)\leq S\Big(\sum_{i}p_i\rho^i\Big)-\sum_{i}p_iS(\rho^i)\, ,
\]
where $I(X:Y_1Y_2)$ is the mutual information of the random variables $X$ and $(Y_1,Y_2)$.
\end{lemma}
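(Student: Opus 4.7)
The plan is to reduce the statement to a single application of the standard (one-shot) Holevo bound by repackaging the bivariate measurement outcome $(Y_1,Y_2)$ as a single outcome of a POVM applied to the output of $\mathcal{N}$. First I would note that by hypothesis, the state $\sigma^i:=\mathcal{N}(\rho^i)$ is diagonal in the product computational basis $\{|j\rangle|k\rangle\}$ of $\mathcal{H}_{Z_1}\otimes \mathcal{H}_{Z_2}$, with diagonal entries precisely $p_{jk|i}$. Consequently, the POVM $\{M_{jk}\}=\{|j\rangle\langle j|\otimes |k\rangle\langle k|\}$ on $\mathcal{H}_{Z_1}\otimes \mathcal{H}_{Z_2}$, when applied to $\sigma^i$, produces the outcome $(j,k)$ with probability $\mathrm{Tr}[\sigma^i M_{jk}]=p_{jk|i}$.

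Next, I would reinterpret the bivariate random variable $(Y_1,Y_2)$ as a single random variable $L$ taking values in the composite index set, with joint distribution $\Pr(X=i,L=(j,k)) = p_i\, p_{jk|i}$. Under this relabeling, $I(X:Y_1Y_2)=I(X:L)$. The standard one-shot Holevo theorem, applied to the ensemble $\{p_i,\sigma^i\}$ with the POVM $\{M_{jk}\}$, then yields
\begin{equation*}
I(X:L)\leq S\Big(\sum_{i}p_i\sigma^i\Big)-\sum_{i}p_i S(\sigma^i).
\end{equation*}
This gives the desired bound in terms of the output ensemble of $\mathcal{N}$.

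To finish, I would invoke the monotonicity of the Holevo quantity under CPTP maps, which is an immediate consequence of monotonicity of the quantum relative entropy: for any channel $\mathcal{N}$ and ensemble $\{p_i,\rho^i\}$,
\begin{equation*}
S\Big(\sum_{i}p_i\mathcal{N}(\rho^i)\Big)-\sum_{i}p_iS\bigl(\mathcal{N}(\rho^i)\bigr)\;\leq\; S\Big(\sum_{i}p_i\rho^i\Big)-\sum_{i}p_iS(\rho^i),
\end{equation*}
since the left-hand side equals $\sum_i p_i\, D\bigl(\mathcal{N}(\rho^i)\,\big\|\,\mathcal{N}(\bar\rho)\bigr)$ and the right-hand side equals $\sum_i p_i\, D(\rho^i\|\bar\rho)$, where $\bar\rho=\sum_i p_i\rho^i$. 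Chaining this with the previous inequality produces $I(X:Y_1Y_2)\leq S(\sum_i p_i\rho^i)-\sum_i p_iS(\rho^i)$, as claimed.

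There is no substantive analytic obstacle here; the content of the lemma is essentially a repackaging of the Holevo bound together with data processing. The only point requiring mild care is the bookkeeping that identifies the classical mutual information $I(X:Y_1Y_2)$ of the bivariate distribution $p_{jk|i}$ with the accessible information for the measurement $\{M_{jk}\}$ on the output ensemble $\{p_i,\mathcal{N}(\rho^i)\}$, after which the classical Holevo bound and monotonicity under $\mathcal{N}$ can be applied in sequence.
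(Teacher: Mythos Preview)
Your argument is correct and complete. Both your proof and the paper's reach the same conclusion via data processing, but they organize the argument differently.

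The paper does not invoke the Holevo bound as a black box. Instead it takes a Stinespring dilation $V$ of $\mathcal{N}$ into $\mathcal{H}_{Z_1}\otimes\mathcal{H}_{Z_2}\otimes\mathcal{H}_{Z_3}$, forms the classical--quantum state $\rho_{XZ_1Z_2Z_3}=\sum_i p_i|i\rangle\langle i|\otimes V\rho^iV^\dagger$, and observes that tracing out $Z_3$ yields precisely the diagonal classical state encoding the joint distribution $p_{ijk}$. A single application of monotonicity of the \emph{quantum} mutual information under partial trace then gives $I(X:Y_1Y_2)=S(X:Z_1Z_2)\leq S(X:Z_1Z_2Z_3)$, and because $V$ is an isometry the right-hand side is computed directly to equal $\chi(\{p_i,\rho^i\})$.

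Your route instead splits the inequality in two: first apply the standard Holevo bound to the \emph{output} ensemble $\{p_i,\mathcal{N}(\rho^i)\}$ with the product-basis POVM, and then push the Holevo quantity back to the input ensemble via monotonicity of relative entropy under $\mathcal{N}$. This is slightly more modular---it treats Holevo's theorem and the data-processing step as separate citations rather than rederiving the former through a dilation---at the cost of invoking one more named result. The paper's approach is more self-contained but requires the Stinespring machinery and an explicit entropy computation; yours is shorter once the prerequisites are granted. Neither buys anything the other lacks in generality.
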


\begin{proof}
By the Stinespring dilation theorem there exists an isometry $\mathcal{V}:\mathcal{L}(\mathcal{H}_A)\to \mathcal{L}(\mathcal{H}_{Z_1})\otimes \mathcal{L}(\mathcal{H}_{Z_2})\otimes \mathcal{L}(\mathcal{H}_{Z_3})$ and an operator $V$ such that $\mathcal{V}(\rho)=V\rho V^{\dag}$ for all $\rho\in \mathcal{L}(\mathcal{H}_{A})$ and $\mathcal{N}(\rho)=\Tr_{Z_3}\Big[V\rho V^{\dag}\Big]$ for all $\rho\in \mathcal{L}(\mathcal{H}_A)$. Now let $\rho_{XZ_1Z_2Z_3}$ be the density matrix given by
\[
\rho_{XZ_1Z_2Z_3}=\sum_{i}p_i|i\rangle \langle i|\otimes V\rho^i\hspace{0.25mm}V^{\dag}\, ,
\]
and let $\rho_{XZ_1Z_2}$ be the density matrix given by
\[
 \rho_{XZ_1Z_2}=\sum_{i,j,k}p_{ijk}|i\rangle \langle i|\otimes |j\rangle \langle j|\otimes |k\rangle \langle k|\, ,
 \]
where $p_{ijk}=p_ip_{jk|i}$. We then have
\begin{align*}
\Tr_{Z_3}\Big[\rho_{XZ_1Z_2Z_3}\Big]&=\Tr_{Z_3}\Big[\sum_{i}p_i|i\rangle \langle i|\otimes V\rho^xV^{\dag}\Big]=\sum_{i}p_i|i\rangle \langle i|\otimes \Tr_{Z_3}\Big[V\rho^iV^{\dag}\Big]  \\
&=\sum_{i}p_i|i\rangle \langle i|\otimes \mathcal{N}(\rho^i)=\sum_{i}p_i|i\rangle \langle i|\otimes \Big(\sum_{j,k}p_{jk|i} |j\rangle \langle j|\otimes |k\rangle \langle k|\Big)  \\
&=\sum_{i,j,k}p_ip_{jk|i}|i\rangle \langle i|\otimes |j\rangle \langle j|\otimes |k\rangle \langle k|=\sum_{i,j,k}p_{ijk}|i\rangle \langle i|\otimes |j\rangle \langle j|\otimes |k\rangle \langle k| \\
&=\rho_{XZ_1Z_2}\, ,
\end{align*}
thus monotonicity of (spatial) quantum mutual information implies $S(X:Z_1Z_2)\leq S(X:Z_1Z_2Z_3)$, where $S(X:Z_1Z_2)$ is the quantum mutual information associated with the density matrix $\rho_{XZ_1Z_2}$ and $S(X:Z_1Z_2Z_3)$ is the quantum mutual information associated with the density matrix $\rho_{XZ_1Z_2Z_3}$.  Moreover, we have
\begin{align*}
S(X:Z_1Z_2Z_3)&=S\Big(\sum_{i}p_i|i\rangle \langle i|\Big)+S(\rho_{Z_1Z_2Z_3})-S(\rho_{XZ_1Z_2Z_3}) \\
&=H(p)+S\Big(\sum_{i}p_iV\rho^iV^{\dag}\Big)-S\Big(\sum_{i}p_i|i\rangle \langle i|\otimes V\rho^i\hspace{0.25mm}V^{\dag}\Big) \\
&=H(p)+S\Big(\sum_{i}p_i\rho^i\Big)-\Big(H(p)+\sum_{i}p_iS(\rho^i)\Big) \\
&=S\Big(\sum_{i}p_i\rho^i\Big)-\sum_{i}p_iS(\rho^i)\, ,
\end{align*}
and since $I(X:Y_1Y_2)=S(X:Z_1Z_2)$, we then have
\[
I(X:Y_1Y_2)=S(X:Z_1Z_2)\leq S(X:Z_1Z_2Z_3)=S\Big(\sum_{i}p_i\rho^i\Big)-\sum_{i}p_iS(\rho^i)\, ,
\]
as desired.
\end{proof}

\end{document}